\theoremstyle{definition}
\newtheorem{theorem}{Theorem}
\newtheorem{corollary}{Corollary}
\newtheorem{definition}{Definition}
\newtheorem{proposition}{Proposition}
\newtheorem{conjecture}{Conjecture}
\begin{document}

\title{Simple approximate equilibria in games with many players}
\author{Itai Arieli\footnote{Technion, Industrial Engineering and Management  {\tt iarieli@technion.ac.il}} \ and Yakov Babichenko\footnote{Technion, Industrial Engineering and Management  {\tt yakovbab@technion.ac.il}}}

\maketitle

\begin{abstract}
We consider $\epsilon$-equilibria notions for constant value of $\epsilon$ in $n$-player $m$-actions games where $m$ is a constant. We focus on the following question: What is the largest grid size over the mixed strategies such that $\epsilon$-equilibrium is guaranteed to exist over this grid.

For Nash equilibrium, we prove that constant grid size (that depends on $\epsilon$ and $m$, but not on $n$) is sufficient to guarantee existence of weak approximate equilibrium. This result implies a polynomial (in the input) algorithm for weak approximate equilibrium.

For approximate Nash equilibrium we introduce a closely related question and prove its \emph{equivalence} to the well-known Beck-Fiala conjecture from discrepancy theory. To the best of our knowledge this is the first result introduces a connection between game theory and discrepancy theory.

For correlated equilibrium, we prove a $O(\frac{1}{\log n})$ lower-bound on the grid size, which matches the known upper bound of $\Omega(\frac{1}{\log n})$. Our result implies an $\Omega(\log n)$ lower bound on the rate of convergence of dynamics (any dynamic) to approximate correlated (and coarse correlated) equilibrium. Again, this lower bound matches the $O(\log n)$ upper bound that is achieved by regret minimizing algorithms.

\end{abstract}
\section{Introduction}
The algorithmic aspect of equilibria has been studied extensively from the moment when the concept of Nash equilibrium \cite{LH} was introduced, and mainly in the past three decades \cite{GZ,PPAD,DGP,R}. A naive approach for computation of approximate Nash equilibrium in normal form games is the following: 
\begin{itemize}
\item[-] Set a ``dense enough grid" of the strategy profiles, such that approximate Nash equilibrium is guaranteed to exist on this grid.
\item[-] Exhaustively search over all grid points whether it forms an approximate Nash equilibrium.
\end{itemize}
Despite the extensive study of equilibrium computation and the naivety of the above algorithm, no better algorithm for computation of approximate equilibrium is known (except for special classes of games, e.g., \cite{DP}). Surprisingly, the above algorithm is known to be \emph{optimal} for games with constant number of players (under the exponential hypothesis for the PPAD class), see \cite{R}. We show an additional case where this algorithm is optimal (see Corollary \ref{cor:poly}).

This motivates the study of the question: how dense should the grid of the strategy profiles be in order to guarantee existence of approximate equilibrium over the grid. A standard notion that captures the grid's density is the following.
\begin{definition}
A probability distribution $\mu \in \Delta(B)$ is called \emph{$k$-uniform} if it is uniform distribution over a multi-set of $B$ of size $k$, or equivalently, if for every $b\in B$ we have $\mu(b)=\frac{c}{k}$ for some $c\in \mathbb{Z}$. 
\end{definition}

The class of grids (over the mixed strategies) that is considered in the present paper is the class of $k$-uniform distributions, and its density is determined by $k$. The larger $k$ is, the denser the grid is.

The main question of the present paper can be formulates as follows:

\vspace{1mm}
\noindent
\textbf{Question 1:} Given an $\epsilon$-equilibrium solution concept, for which values of $k=k(\epsilon,n,m)$ existence of 
$k$-uniform $\epsilon$-equilibrium is guaranteed for every $n$-player $m$-action game? More concretely, we will be interested in understanding the asymptotic behaviour of $\lim_{n \rightarrow \infty} k(\epsilon,n,m)$ when we set $m$ and $\epsilon$ as fixed constants.

The dependence of $k=k(\epsilon,n,m)$ on the number of actions $m$ has a neat characterization of $k=\Theta(\log m)$ \cite{LMM,A}. However, the dependence on the number of players $n$ is less understood: It is known that the dependence is at most $O(\log n)$ (see \cite{BBP}), but no lower bounds were known (neither for Nash equilibria nor for correlated equilibria). The present paper aims to close these gaps. The established results have implications to equilibria computation, and to rate of convergence of learning dynamics.


\subsection{Main results}
\subsubsection{Approximate Nash equilibria}
Here, by $k$-uniform approximate Nash equilibrium we refer to an action profile where every player uses a $k$-uniform strategy. The best known upper bound for $\epsilon$-Nash equilibrium is $k=O(\log n)$, see \cite{BBP}. The question whether $k=O(1)$ suffices is an interesting open question and we address it here. In Theorem \ref{theo:awne} we prove that for the weaker notion of weak approximate equilibrium indeed $k=O(1)$ suffices. This result implies a polynomial (in the input) algorithm for computing weak approximate Nash equilibrium. To the best of our knowledge, no previous results have demonstrated the existence of polynomial algorithm for any approximate notion of Nash equilibrium in normal form games. It is interesting to note that the query complexity of weak approximate Nash equilibrium is polynomial (in the input), \cite{R}. Thus, a sub-polynomial algorithm for weak approximate equilibrium is impossible. Hence, again, as in the two-player case, the naive exhaustive search algorithm is proved to be optimal.

Unfortunately, we did not succeeded to prove or disprove whether $k=O(1)$ suffices for the standard notion of $\epsilon$-Nash equilibrium (or an $\epsilon$-well supported Nash equilibrium). However, we do gain some incites about a closely related question: whether there exists an approximate Nash equilibrium on the grid that is ``close to" an exact equilibrium.

A natural approach to prove an existence of approximate equilibrium is to search for one near-by an exact equilibrium. More concretely, consider a binary-action game ($m=2$) with an exact Nash equilibrium $x=(x_i)_{i\in [n]}\in [0,1]^n$, where $x_i$ is the probability of playing the first action. The equilibrium $x$ belongs to a $\frac{1}{k}$-cube on the grid $x\in C^k_x=[\frac{c_1}{k},\frac{c_1+1}{k}]\times ... \times [\frac{c_n}{k},\frac{c_n+1}{k}]$.  

\vspace{1mm}
\noindent
\textbf{Question 2:} For which values of $k$ it is guaranteed that for every game and every exact equilibrium $x$ one of the vertices of $C^k_x$ will be an $\epsilon$-Nash equilibrium?
\vspace{1mm}

In Proposition \ref{pro:far} we show that for $k=O(\sqrt{\log n})$ there exists a game with a binary action for each player and a unique equilibrium $x$ such that all points in $C^k_x$ are not approximate Nash equilibria. Moreover, all approximate Nash equilibria on the grid (although exist) are located ``very far" from the equilibrium: There exist players who play a certain pure strategy in the exact equilibrium and the opposite strategy in any approximate well supported Nash equilibrium on the grid.

Proposition \ref{pro:far} demonstrates that finding close approximate equilibrium for $k=O(1)$ is impossible for all games. Thus we restrict attention to particular \emph{classes} of games and ask the following question:

\vspace{1mm}
\noindent
\textbf{Question 3:} Given a class of games, is it guaranteed that for every exact equilibrium $x$ one of the vertices of $C^k_x$ will be an $\epsilon$-Nash equilibrium, for $k=O(1)$? 
\vspace{1mm}


An interesting observation is that for some classes of games answering Question 3 is (probably) mathematically very challenging: We introduce a class of games for which Question 3 is \emph{equivalent}\footnote{In fact, we prove the equivalence for some concrete instances of the Beck-Fiala conjecture of approximately-(up to a constant factor)-balanced matrices, see Section \ref{sec:disc}. By considering closely related question to Beck-Fiala conjecture where the answer is known, it is reasonable to believe that these concrete instances are "the hardest". We should note that no one have proved or disproved the Beck-Fiala conjecture for these instances.} to the well known Beck-Fiala conjecture (since 1981) in discrepancy theory \cite{BF,M,C}, see Theorem \ref{theo:dis}. To the best of our knowledge Theorem \ref{theo:dis} is the first result that establishes a connection between game theory and discrepancy theory. 

\subsubsection{Approximate correlated equilibria}
Since a correlated equilibrium is a distribution over the action profiles, here a $k$-uniform distribution means a uniform distribution over $k$ action profiles.

A $k=O(\log n)$ upper bound was proved to be sufficient for existence of $\epsilon$-correlated equilibrium, see \cite{BBP}. In fact, regret minimizing algorithms converge to an $\epsilon$-correlated equilibrium in a rate of $O(\log n)$ \cite{CBL,GR}. This provides an alternative proof for the existence of such $k$-uniform approximate correlated equilibrium. In Theorem \ref{theo:ir-lower-bound} we prove a lower bound of $k=\Omega(\log n)$. This result shows that no dynamic can converge to an approximate correlated equilibrium faster than in $\Omega(\log n)$ steps, which shows the optimality, in the rate of convergence, of the regret minimizing dynamics. We note that it was known that regret minimizing dynamics cannot converge faster than in $\Omega(\log n)$ steps. Our result shows that no dynamic at all can converge faster.

We also note that if we restrict attention to the notion of $(\epsilon,\delta)$-\emph{weak} approximate correlated equilibrium: i.e., a distribution over the profiles such that $1-\delta$ fraction of players are $\epsilon$-best replying, then there exists a $k$-uniform weak approximate correlated equilibrium for $k=O(1)$. This observation is similar to the Nash equilibrium case.

\section{Preliminaries}

An $n$-player $m$-action game consist of a set of players $[n]$, with an action set $[m]$, and a payoff function $u_i:[m]^n \rightarrow [0,1]$ for every player $i$. 
Let $\Delta(B)$ denote the set of all probability distributions over the set $B$. The set of mixed strategies of player $i$ is denoted by $\Delta([m])$. The set of correlated strategies is $\Delta([m]^n)$.
The utility function of player $i$ can be naturally extended to mixed strategy profiles $u_i:(\Delta([m]))^n \rightarrow \mathbb{R}$ and to correlated strategies $u_i:\Delta([m]^n) \rightarrow \mathbb{R}$ as the expected payoff under the given distributions. Given a profile of (mixed) actions $x=(x_i)$ we denote by $x_{-i}$ the profile of actions of player's $i$ opponents, namely $x_{-i}=(x_1,...,x_{i-1},x_{i+1},...,x_n)$.

\begin{definition}
\textbf{Approximate Nash equilibrium.} A profile of mixed actions $(x_i)_{i\in [n]}$ where $x_i\in \Delta([m])$ is an \emph{$\epsilon$-Nash equilibrium} if no player can gain more than $\epsilon$ by a unilateral deviation. Namely, $u_i(x_i,x_{-i})\geq u_i(a_i,x_{-i})-\epsilon$ for every player $i$ and every action $a_i\in [m]$.
\end{definition}

\begin{definition}
\textbf{Weak approximate Nash equilibrium.} A profile of mixed actions $(x_i)_{i\in [n]}$ where $x_i\in \Delta([m])$ is an \emph{$(\epsilon,\delta)$-weak approximate Nash equilibrium} if at least $1-\delta$ fraction of the players cannot gain more than $\epsilon$ by a unilateral deviation.
\end{definition}

\begin{definition}
\textbf{(Approximately) Individually rational payoffs.} The \emph{individually rational level} of player $i$ is the maximal number $v_i$ that he can guarantee (using mixed strategies) against any action of the opponents. Namely $$v_i = \max_{x_i \in \Delta([m])} \min_{a_{-i}\in [m]^{n-1}} u_i(x_i,a_{-i}).$$
A correlated distribution $x\in \Delta([m]^n)$ is \emph{$\epsilon$-individually rational} if $u_i(x)\geq v_i - \epsilon$.
\end{definition}

There are several notions of correlated equilibria and approximate correlated equilibria, arguably the strongest notion of approximate correlated equilibrium is the following.

\begin{definition}
\textbf{Approximate correlated equilibrium.} 
A correlated distribution $x\in \Delta([m]^n)$ is an \emph{$\epsilon$-correlated equilibrium} if $$\mathbb{E}_{a\sim x} [u_i(a)]\geq \mathbb{E}_{a\sim x} [u_i(f(a_i),a_{-i})]-\epsilon$$ for every player $i$ and every switching function $f:[m] \rightarrow [m]$.
\end{definition}
The intuition behind this notion comes from the idea that a correlated strategy $x$
can be implemented by a mediator who draws an action profile $a=(a_i)_{i\in [n]}$ according to $x$ and recommends every player $i$ to use $a_i$. If no player can gain more than $\epsilon$ by deviating from mediator's recommendation (namely to play action $f(j)$ every time mediator recommends $j$) then the distribution $x$ is an $\epsilon$-correlated equilibrium.

Approximate correlated equilibrium also has a weak analogue of the solution.
\begin{definition}
\textbf{Weak approximately correlated equilibrium.} 
A correlated distribution $x\in \Delta([m]^n)$ is an \emph{$(\epsilon,\delta)$-weak approximate correlated equilibrium} if $$\mathbb{E}_{a\sim x} [u_i(a)]\geq \mathbb{E}_{a\sim x} [u_i(f(a_i),a_{-i})]-\epsilon$$ for every switching function $f:[m] \rightarrow [m]$ for at least $(1-\delta)$-fraction of the players.
\end{definition}

We would like to note that $\epsilon$-individual rationality is the weakest possible notion for solutions that require rationality from all players (unlike weak approximate correlated equilibrium for instance). In particular the set of approximately individually rational distributions contains the set of approximate correlated equilibria and approximate coarse correlated equilibria.
 
\section{Weak approximate Nash equilibrium}
\begin{theorem}\label{theo:awne}
Every $n$-player $m$-actions game admits a $k$-uniform $(\epsilon,\delta)$-weak approximate Nash equilibrium for every $k\geq\frac{32(\ln 8 + \ln m -\ln \epsilon -\ln \delta)}{\epsilon^2}$.
\end{theorem}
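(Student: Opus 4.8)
The plan is to use the probabilistic sampling method: fix an exact Nash equilibrium and round each player's mixed strategy to a $k$-uniform one by i.i.d.\ sampling, then show that with positive probability the resulting profile is $(\epsilon,\delta)$-weak approximate. Concretely, by Nash's theorem fix an exact equilibrium $x=(x_i)_{i\in[n]}$. For each player $i$ draw $k$ actions i.i.d.\ from $x_i$ and let $\hat x_i\in\Delta([m])$ be the empirical (hence $k$-uniform) distribution; write $\hat x=(\hat x_i)$. Since weak approximate equilibrium only asks that a $(1-\delta)$-fraction of players be $\epsilon$-best-replying, it suffices to bound, for a single player, the probability $p:=\P[\,i\text{ is }\epsilon\text{-unhappy in }\hat x\,]$ by $\delta$: then $\E[\text{fraction of unhappy players}]\le p<\delta$, and Markov's inequality produces a realization in which at most a $\delta$-fraction of players are unhappy.

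First I would reduce ``player $i$ is $\epsilon$-happy'' to two concentration events. Let $V_i:=u_i(x_i,x_{-i})=\max_{a_i\in[m]}u_i(a_i,x_{-i})$, the equality holding because $x$ is an exact equilibrium. Player $i$ is $\epsilon$-happy as soon as (a) $u_i(a_i,\hat x_{-i})\le V_i+\epsilon/2$ for every deviation $a_i$, and (b) $u_i(\hat x_i,\hat x_{-i})\ge V_i-\epsilon/2$. Because the utilities are multilinear in the players' strategies and the samples are independent, $\E[u_i(a_i,\hat x_{-i})]=u_i(a_i,x_{-i})\le V_i$ and $\E[u_i(\hat x_i,\hat x_{-i})]=V_i$. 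Thus (a) and (b) both say that a multilinear functional of the empirical profile concentrates around its mean, and the whole theorem rests on establishing such concentration at a rate that does \emph{not} depend on $n$.

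The key step is precisely this $n$-independence. A crude bounded-differences argument fails: resampling one of the $(n-1)k$ coordinates moves $u_i(a_i,\hat x_{-i})$ by at most $1/k$, but McDiarmid/Azuma then pays a factor $(n-1)k\cdot(1/k)^2=(n-1)/k$ in the exponent, which is useless. The resolution is to control the \emph{variance} instead. Writing $Z_j=\hat x_j-x_j$, expanding the multilinear functional into its Efron--Stein/ANOVA components $\hat F_T$ indexed by subsets $T$ of opponents (with respect to the fixed product measure $\prod_j x_j$), and taking full expectation over the samples, the order-$|T|$ term carries a factor $k^{-|T|}$, so
\[
\mathrm{Var}\big(u_i(a_i,\hat x_{-i})\big)=\sum_{\emptyset\neq T}k^{-|T|}\,\|\hat F_T\|^2\ \le\ \frac1k\sum_{\emptyset\neq T}\|\hat F_T\|^2=\frac1k\,\mathrm{Var}_{a_{-i}\sim x_{-i}}\big(u_i(a_i,a_{-i})\big)\le\frac{1}{4k},
\]
where $F(a_{-i})=u_i(a_i,a_{-i})$. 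The crucial point is that the first-order influences $\sum_j\|\hat F_{\{j\}}\|^2=\sum_j\mathrm{Var}_{x_j}\big(\E[F\mid a_j]\big)$ sum to at most the \emph{total} variance $\mathrm{Var}(F)\le 1/4$, so the bound is independent of $n$ even though player $i$ may be influenced by all opponents. The same computation, now including player $i$'s own samples, gives $\mathrm{Var}\big(u_i(\hat x_i,\hat x_{-i})\big)\le 1/(4k)$.

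Finally I would upgrade this $n$-free variance bound to an exponential tail: running the Doob martingale that reveals the $(n-1)k$ samples one at a time (each increment bounded by $1/k$, summed conditional variances governed by the $O(1/k)$ bound above), a Bernstein/Freedman-type inequality yields $\P\big[\,|u_i(a_i,\hat x_{-i})-u_i(a_i,x_{-i})|>t\,\big]\le 2\exp(-2kt^2)$ in the variance-dominated regime, with no dependence on $n$. Taking $t=\epsilon/8$ and a union bound over the $m$ deviations together with event (b) gives $p\le 2(m+1)\exp(-k\epsilon^2/32)$, and requiring $p<\delta$ yields $k\ge \frac{32}{\epsilon^2}\ln\frac{2(m+1)}{\delta}$, which is implied by the stated threshold $k\ge\frac{32(\ln 8+\ln m-\ln\epsilon-\ln\delta)}{\epsilon^2}$ for $\epsilon\le1$. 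I expect this last upgrade to be the main obstacle: the martingale's pathwise quadratic variation is a priori $n$-dependent, so one must feed in the smallness of the summed influences — via a careful Bernstein/Freedman bound or an entropy-method self-bounding argument — to keep the tail $n$-free. The remaining ingredients (Nash's theorem, multilinearity, and the concluding Markov step over players) are routine.
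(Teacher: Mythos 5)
Your overall architecture coincides with the paper's: sample $k$ i.i.d.\ actions from each coordinate of an exact Nash equilibrium, establish an $n$-independent concentration bound for each player's deviation payoffs, union-bound over the $m$ deviations, and finish with the averaging (Markov) step over players. The difference is where the concentration comes from. The paper imports it as a black box: Theorem \ref{theo:prod-con}, the product-distribution concentration inequality of Babichenko--Barman--Peretz, which directly gives $\P\bigl(|u_i(a_i,\hat x_{-i})-u_i(a_i,x_{-i})|>\hat\epsilon\bigr)\leq \frac{4e^{-(\hat\epsilon^2/8)k}}{\hat\epsilon}$ with no dependence on $n$. You instead try to prove such a bound from scratch, and that attempt contains a genuine gap.

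Your ANOVA/Efron--Stein variance computation is correct: the order-$|T|$ component does pick up a factor $k^{-|T|}$, and Parseval gives $\mathrm{Var}\bigl(u_i(a_i,\hat x_{-i})\bigr)\leq \frac{1}{4k}$ independently of $n$. But by itself this yields only a Chebyshev tail, hence a threshold $k=O\bigl(\frac{m}{\delta\epsilon^2}\bigr)$ --- still $n$-free, and enough for a qualitative ``constant $k$ suffices'' statement, but polynomially worse than the logarithmic bound the theorem asserts. The upgrade to the exponential tail $2e^{-2kt^2}$ is precisely the content of Theorem \ref{theo:prod-con}, and your proposed route to it does not go through as stated: Freedman's inequality needs an almost-sure (pathwise) bound on the predictable quadratic variation of the Doob martingale, whereas your decomposition controls only its \emph{expectation}. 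Pathwise, the sum of conditional variances of the $(n-1)k$ increments can be of order $\Theta(n/k)$, so the quantity you would feed into Freedman is $n$-dependent, which is exactly the failure mode you identified for McDiarmid. Controlling the quadratic variation with high probability is itself a concentration problem of the same type (a nontrivial bootstrapping), not a routine technicality; you flag it yourself as ``the main obstacle,'' and it is in fact the entire mathematical content separating the theorem from a Chebyshev-level statement. To complete the proof you should either invoke the BBP inequality as the paper does, or supply an actual proof of an $n$-free exponential tail for multilinear functionals of empirical product measures; everything else in your argument (the reduction to events (a) and (b), the union bound, and the final averaging over players) is sound and matches the paper.
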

The important property of the bound on $k$ is the fact that it \emph{does not} depend on $n$. A straightforward corollary from Theorem \ref{theo:awne} is the existence of polynomial algorithm for weak approximate equilibrium.

\begin{corollary}\label{cor:poly}
For constant $\epsilon,\delta>0$ and $m\in \mathbb{N}$ there exists a $poly(N)$ algorithm that computes an $(\epsilon,\delta)$-weak approximate Nash equilibrium in every $n$-player $m$-action game, where $N=n\cdot m^n$ is the input size (the size of the game). 
\end{corollary}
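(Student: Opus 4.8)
The plan is to combine Theorem~\ref{theo:awne} with a brute-force search over the constant-resolution grid, and then observe that the seemingly exponential search is in fact polynomial in the exponential-size input $N$. First I would fix the constant $k=\lceil 32(\ln 8+\ln m-\ln\epsilon-\ln\delta)/\epsilon^2\rceil$ supplied by Theorem~\ref{theo:awne}; since $\epsilon,\delta,m$ are constants, $k$ is a constant independent of $n$. Theorem~\ref{theo:awne} guarantees that the game admits a $k$-uniform $(\epsilon,\delta)$-weak approximate Nash equilibrium, i.e.\ an equilibrium in which every player $i$ plays a $k$-uniform strategy $x_i\in\Delta([m])$. The algorithm is then the naive one: enumerate all $k$-uniform strategy profiles and test each for the weak-equilibrium property, outputting the first one that passes. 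Its correctness is immediate, because Theorem~\ref{theo:awne} certifies that at least one enumerated profile passes the test.

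Next I would bound the number of candidates. A single player's $k$-uniform strategy is determined by how the $k$ mass units are split among the $m$ actions, so there are $M:=\binom{k+m-1}{m-1}$ of them, a constant depending only on $k$ and $m$. Hence there are exactly $M^n$ $k$-uniform profiles. The crucial (and only subtle) step is to see that $M^n$ is polynomial in the input size $N=n\cdot m^n$. Assuming $m\ge 2$ (the case $m=1$ being trivial), pick a constant $c$ with $M\le m^c$ — for instance $c=\lceil\log_m M\rceil$, which depends only on $\epsilon,\delta,m$. Then
\[
M^n\le m^{cn}=(m^n)^c\le N^c ,
\]
so the number of profiles to examine is $\poly(N)$.

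Finally I would account for the cost of testing one profile $x$. For each player $i$ and each action $a_i$, the expected payoffs $u_i(x)$ and $u_i(a_i,x_{-i})$ are obtained by summing the $m^n$ entries of $i$'s payoff table weighted by the products $\prod_j x_j(\cdot)$, which takes time $\poly(N)$; comparing these across the $m$ deviations decides whether player $i$ is $\epsilon$-best-replying, and counting the passing players against the threshold $(1-\delta)n$ decides the $(\epsilon,\delta)$-weak property. Thus a single test costs $\poly(N)$, and the total running time is $M^n\cdot\poly(N)=\poly(N)$.

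The main point worth emphasizing — and the step where one might be momentarily misled — is the reconciliation of ``exhaustive search over $M^n$ grid points'' with ``polynomial time'': this works precisely because the description of a normal-form game already has size exponential in $n$, so an algorithm running in time exponential in $n$ can still be polynomial in $N$. The only quantitative ingredient is the constancy of $k$ (hence of $M$) in $n$, which is exactly the content of Theorem~\ref{theo:awne}; without the $n$-independence of the grid resolution the count $M^n$ would not collapse into $\poly(N)$.
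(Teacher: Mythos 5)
Your proposal is correct and follows essentially the same route as the paper: fix the constant $k$ from Theorem~\ref{theo:awne}, exhaustively enumerate the $k$-uniform profiles, and observe that their number (bounded by a constant raised to the power $n$) is polynomial in $N=n\cdot m^n$, with correctness guaranteed by the existence result. Your write-up is in fact slightly more careful than the paper's, since you also bound the per-profile verification cost and use the exact count $\binom{k+m-1}{m-1}$ rather than the paper's cruder bound $k^m$, but these are refinements of the same argument, not a different one.
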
 
\begin{proof}[Proof of Corollary \ref{cor:poly}]
The algorithm exhaustively search for an $(\epsilon,\delta)$-weak approximate Nash equilibrium over all the $k$-uniform profiles, for $k=\frac{32(\ln 8 + \ln m -\ln \epsilon -\ln \delta)}{\epsilon^2}$. The number of $k$-uniform mixed actions of a single player is bounded by $k^m$. Thus, the number of $k$-uniform action profiles is bounded by 
\begin{align*}
(k^m)^n=(m^n)^{\frac{m\log k}{\log m}}\leq N^{\frac{m\log k}{\log m}}=poly(N)
\end{align*}
Note that the algorithm is guaranteed to find an $(\epsilon,\delta)$-weak approximate Nash equilibrium by Theorem \ref{theo:awne}.  
\end{proof}

The proof of Theorem \ref{theo:awne} uses similar technique to the one developed in \cite{BBP}. We prove that after \emph{constant} number of samples from an exact Nash equilibrium distribution, with positive probability the sampled mixed action profile forms an $(\epsilon,\delta)$-weak approximate Nash equilibrium. We rely on the following concentration inequality for product distributions that was derived in \cite{BBP}. 

Given a discrete probability space $(\Omega,\mu)$ we denote by $\mu^{(k)}\in \Delta(\Omega)$ the random distribution that is obtained by taking the average of $k$ i.i.d. samples from $\mu$. Namely, $\mu^{(k)}(\omega)=\frac{1}{k}\sum_i \mathbf{1}_{x_i=\omega}$ when $x_1,...,x_k\sim \mu$ are i.i.d. random variables.

\begin{theorem}[\cite{BBP}]\label{theo:prod-con}
Let $(\Omega_1,\mu_1),...,(\Omega_n,\mu_n)$ be discrete probability spaces. Consider the product space $(\Omega=\Pi_i \Omega_i , \mu = \Pi_i \mu_i)$. For every $\hat{\epsilon}>0$, $k\in \mathbb{N}$, and $f:\Omega \rightarrow [0,1]$ we have
\begin{equation*}
\mathbb{P}(|\mathbb{E}_{\Pi_i \mu_i^{(k)}}[f]-\mathbb{E}_\mu [f]|>\hat{\epsilon}) \leq \frac{4e^{-(\hat{\epsilon}^2 /8)k }}{\hat{\epsilon}}
\end{equation*}
\end{theorem}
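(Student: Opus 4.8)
The plan is to avoid the route that first comes to mind---writing $g:=\mathbb{E}_{\Pi_i\mu_i^{(k)}}[f]$ as a function of the $nk$ i.i.d.\ samples $\{x_{i,j}\}_{i\in[n],j\in[k]}$ and applying a bounded-differences (McDiarmid) inequality. Resampling a single $x_{i,j}$ changes $g$ by at most $1/k$, so McDiarmid only yields a tail of order $\exp(-\Theta(k\hat\epsilon^2/n))$, which degrades with $n$ and is useless here. The entire content of the statement is that the bound is \emph{independent} of $n$, so the argument must exploit cancellation across coordinates rather than worst-case sensitivity.

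The key idea I would use is a symmetrization over ``generalized diagonals.'' Writing $x_{\vec j}=(x_{1,j_1},\dots,x_{n,j_n})$ for $\vec j\in[k]^n$, observe that $g=\tfrac{1}{k^n}\sum_{\vec j}f(x_{\vec j})$ and $\mathbb{E}\,g=\mathbb{E}_\mu f$. Let $\sigma_1,\dots,\sigma_n$ be independent uniformly random permutations of $[k]$ and set $A_\sigma=\tfrac1k\sum_{r=1}^k f(x_{1,\sigma_1(r)},\dots,x_{n,\sigma_n(r)})$. Since for each fixed $r$ the tuple $(\sigma_1(r),\dots,\sigma_n(r))$ is uniform on $[k]^n$, averaging over $\sigma$ recovers $g=\mathbb{E}_\sigma[A_\sigma]$. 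The crucial observation is that for every \emph{fixed} $\sigma$ the $k$ diagonal points $w_r:=(x_{1,\sigma_1(r)},\dots,x_{n,\sigma_n(r)})$ use pairwise disjoint samples in each coordinate (because each $\sigma_i$ is a bijection), hence $w_1,\dots,w_k$ are mutually independent and each distributed as $\mu$. Thus, for every fixed $\sigma$, $A_\sigma$ is an honest empirical average of $k$ i.i.d.\ $[0,1]$-valued random variables.

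With this representation the tail bound follows from a one-dimensional Hoeffding estimate. I would bound the moment generating function of $g-\mathbb{E}_\mu f$ by first pulling the $\sigma$-average outside via Jensen's inequality (convexity of $t\mapsto e^{\lambda t}$) and then exchanging the expectations:
\[
\mathbb{E}_{\text{samp}}\!\big[e^{\lambda(g-\mathbb{E}_\mu f)}\big]\le \mathbb{E}_\sigma\,\mathbb{E}_{\text{samp}}\!\big[e^{\lambda(A_\sigma-\mathbb{E}_\mu f)}\big]\le e^{\lambda^2/(8k)},
\]
where the last step applies Hoeffding's lemma to the $k$ i.i.d.\ terms of $A_\sigma$ and is uniform in $\sigma$ and, crucially, independent of $n$. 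A Chernoff optimization ($\lambda=4k\hat\epsilon$) together with a union over the two tails then gives $\mathbb{P}(|g-\mathbb{E}_\mu f|>\hat\epsilon)\le 2e^{-2k\hat\epsilon^2}$, which for every $\hat\epsilon>0$ and $k\ge 1$ is at most the claimed $\tfrac{4e^{-\hat\epsilon^2 k/8}}{\hat\epsilon}$.

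The only genuine obstacle is discovering the symmetrization in the second paragraph; everything after it is routine. The step I would verify most carefully is the mutual (not merely pairwise) independence of $w_1,\dots,w_k$ for fixed $\sigma$, which holds because $(r,i)\mapsto(\sigma_i(r),i)$ is a bijection of $[k]\times[n]$, so the $w_r$ are just a regrouping of the independent family $\{x_{i,j}\}$ into disjoint blocks. I would also check the elementary inequality $2e^{-2k\hat\epsilon^2}\le \tfrac{4e^{-\hat\epsilon^2 k/8}}{\hat\epsilon}$ used to recover the stated constants, noting that this approach in fact proves a strictly sharper estimate than the one quoted.
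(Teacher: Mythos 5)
Your proof is correct, and the comparison here is somewhat special: the paper contains no proof of Theorem \ref{theo:prod-con} at all---the statement is quoted from \cite{BBP}---so your argument is a self-contained alternative to that external proof rather than a variant of anything in the present text. Your symmetrization over generalized diagonals is the classical blocking device of Hoeffding for U-statistics, and every step checks out: the identity $g=\mathbb{E}_\sigma[A_\sigma]$ holds because for each fixed $r$ the tuple $(\sigma_1(r),\dots,\sigma_n(r))$ is uniform on $[k]^n$; for fixed $\sigma$ the points $w_1,\dots,w_k$ are genuinely mutually (not just pairwise) independent with law $\mu$, since $(r,i)\mapsto(\sigma_i(r),i)$ is a bijection of $[k]\times[n]$ and so the $w_r$ regroup the independent family $\{x_{i,j}\}$ into disjoint blocks; the Jensen/Fubini exchange is harmless (the set of permutation tuples is finite, and the empirical measures have finite support even if the $\Omega_i$ are countably infinite); Hoeffding's lemma gives $e^{\lambda^2/(8k)}$ uniformly in $\sigma$; the Chernoff choice $\lambda=4k\hat{\epsilon}$ yields the two-sided bound $2e^{-2k\hat{\epsilon}^2}$; and the final comparison $2e^{-2k\hat{\epsilon}^2}\le \frac{4}{\hat{\epsilon}}e^{-k\hat{\epsilon}^2/8}$ reduces to $\hat{\epsilon}\le 2e^{15k\hat{\epsilon}^2/8}$, which holds for all $\hat{\epsilon}>0$ and $k\ge 1$. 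What your route buys relative to \cite{BBP} is twofold: it delivers a strictly sharper tail, showing that the $\hat{\epsilon}^{-1}$ prefactor and the weakened exponent $\hat{\epsilon}^2 k/8$ in the quoted bound are artifacts of the proof there rather than intrinsic; and it makes the $n$-independence completely transparent, since after the blocking step the dimension $n$ never appears---the problem collapses to Hoeffding's inequality for $k$ i.i.d.\ $[0,1]$-valued variables. Your opening diagnosis of why the naive McDiarmid route fails (per-sample sensitivity $1/k$ across $nk$ samples gives a tail degrading in $n$) is also accurate and correctly identifies cancellation across coordinates as the essential point.
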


\begin{proof}[Proof of Theorem \ref{theo:awne}]
Let $x=(x_i)_{i\in [n]}$ be a Nash equilibrium of the game. We denote $s_i^k=x_i^{(k)}$ the mixed action of player $i$ that is obtained by sampling $k$ i.i.d. draws from $x_i$.
 
Setting $f=u_i$ and $\hat{\epsilon}=\frac{\epsilon}{2}$ in Theorem \ref{theo:prod-con} implies that
\begin{align*}
\mathbb{P}(|u_i(a_i,s^k_{-i})-u_i(a_i,x_{-i})|\geq \epsilon)\leq  \frac{8e^{-(\epsilon^2 /32)k }}{\epsilon}
\end{align*}
for every player $i\in [n]$ and every action $a_i\in [m]$. The choice of $k$ guarantees that
\begin{align*}
\frac{8e^{-(\epsilon^2 /32)k }}{\epsilon}<\frac{\delta}{m}.
\end{align*}
Using the union bound, we get that for every player $i$ with probability greater than $1-\delta$ we have $|u_i(a_i,s_{-i}^k)-u_i(a_i,x_{-i})|\leq \frac{\epsilon}{2}$ for \emph{all} actions $a_i\in [m]$. We denote the above event by $g_i$. Note that the event $g_i$ is a sufficient condition for player $i$ to $\epsilon$-best reply at the profile $(s_i^k)_i$, because
\begin{align*}
u_i(a_i,s_{-i}^k)\leq u_i(a_i,x_{-i})+\frac{\epsilon}{2} \leq \sum_{a'_i\in A_i} s_i^k(a'_i)u_i(a'_i,x_{-i})+\frac{\epsilon}{2}\\
\leq \sum_{a'_i\in A_i} s_i^k(a'_i)u_i(a'_i,s^k_{-i})+\epsilon=u_i(s^k_i,s^k_{-i})+\epsilon.
\end{align*}

Since each one of the events $g_i$ happen with probability of at least $1-\delta$ there exists a realization $(s_i^k)_i$ such that at least $(1-\delta)$ fraction of the events $g_i$ happen. Such a realization $(s_i^k)_i$ is an $(\epsilon,\delta)$-weak approximate Nash equilibrium.  
\end{proof}

\section{Approximate Nash equilibrium}
We consider games with binary actions where player's mixed strategy is a real number $x_i\in [0,1]$.
For clarity of presentation, we assume that $k$ is odd, and therefore the mixed strategy $\frac{1}{2}$ is $\frac{1}{2k}$-away from both closest points on the grid: $\frac{k-1}{2k}$ and $\frac{k+1}{2k}$. This assumption is for clarity of presentation only: For even values of $k$ we can slightly change the payoffs in the constructed games such that the exact equilibrium will appear at the point $\frac{k+1}{2k}$. Here, again, we have a situation where the two closest points on the grid, $\frac{1}{2}$ and $\frac{k+2}{2k}$, are $\frac{1}{2k}$ away from the exact equilibrium.

We recall that given an equilibrium $x$ we denote by  $C^k_x=[\frac{c_1}{k},\frac{c_1+1}{k}]\times ... \times [\frac{c_n}{k},\frac{c_n+1}{k}]$ the $\frac{1}{k}$-cube where the equilibrium $x$ is located.  

\begin{proposition}\label{pro:far}
There exists an $n$-player binary-action with a unique exact Nash equilibrium $x$, such that for $k<\frac{\sqrt{\log n}}{8}$ non of the points on the grid that are close to $x$ (namely the set $C^k_x$) form a $0.1$-Nash equilibrium.  
\end{proposition}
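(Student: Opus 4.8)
The plan is to build the game from two groups of players and to reduce the failure of the whole cube $C^k_x$ to a \emph{one-sided} discrepancy (anti-concentration) statement. I take a small set of $d$ \emph{balancing} players, all mixing at $\tfrac12$ in equilibrium, and a large set of $P=n-d$ \emph{test} players that strictly --- but only barely --- prefer a fixed pure action. Each test player $p$ is given a threshold payoff: its gain from action $1$ over action $0$ is $\mathrm{sign}\big(\sum_{j\le d} w_{pj}\chi_j-\theta_p\big)$, where $\chi_j\in\{\pm1\}$ encodes the action of balancing player $j$ and $w_p\in\{\pm1\}^d$ is a weight vector (chosen at random). I calibrate $\theta_p$ so that, when every balancing player mixes at $\tfrac12$, the expected gain of $p$ equals a tiny $\gamma_p>0$; thus $p$ plays its preferred pure action in $x=(\tfrac12,\dots,\tfrac12,\text{pure})$. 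The balancing players are wired together (e.g.\ by a cyclic matching-pennies interaction) so that $x$ is the \emph{unique} Nash equilibrium. Since the test players already sit on the grid, $C^k_x$ only moves the $d$ balancing coordinates, so its vertices are exactly the sign patterns $\eta\in\{\pm1\}^d$, with $\eta_j$ recording whether player $j$ is rounded to $\tfrac12+\tfrac1{2k}$ or $\tfrac12-\tfrac1{2k}$.

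Next I would translate ``is $\eta$ a $0.1$-equilibrium?'' into a condition on the imbalances $\langle w_p,\eta\rangle$. At the vertex $\eta$ the mean of $\sum_{j}w_{pj}\chi_j$ shifts by $\tfrac1k\langle w_p,\eta\rangle$ relative to equilibrium, so by anti-concentration of this near-Gaussian sum (variance $\approx d$) the expected gain of test player $p$ moves by $\approx \tfrac{c}{\sqrt d}\cdot\tfrac{\langle w_p,\eta\rangle}{k}$ for an absolute constant $c$. Hence $p$ acquires a $0.1$-profitable deviation --- its barely-preferred action becomes strictly dominated by more than $0.1$ --- as soon as $\langle w_p,\eta\rangle$ drops below $-\Theta(k\sqrt d)$. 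Therefore $\eta$ \emph{fails} to be a $0.1$-equilibrium whenever $\min_p\langle w_p,\eta\rangle\lesssim -k\sqrt d$, and the whole proposition reduces to the \emph{discrepancy lower bound}: for a good choice of weights, every $\eta\in\{\pm1\}^d$ triggers some test player.

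I would establish this bound probabilistically. Fix $\eta$; each $\langle w_p,\eta\rangle$ is a sum of $d$ independent signs, so $\Pr[\langle w_p,\eta\rangle<-0.1k\sqrt d/c]\approx\bar\Phi(0.1k/c)$ for the Gaussian tail $\bar\Phi$, and, the test players being independent, the probability that \emph{no} test player is triggered at $\eta$ is at most $\exp\!\big(-P\,\bar\Phi(0.1k/c)\big)$. A union bound over the $2^d$ vertices then succeeds provided $P\,\bar\Phi(0.1k/c)>d\ln 2$, i.e.\ provided $0.1k/c\lesssim\sqrt{2\log(P/d)}$. Choosing, say, $d\approx\sqrt n$ and $P=n-d\approx n$ gives $\log(P/d)=\Theta(\log n)$, so with positive probability some weight system works for \emph{all} $k$ below $\Theta(\sqrt{\log n})$; tracking the constants with exact binomial anti-concentration in place of $\bar\Phi$ is what pins the threshold at $k<\tfrac{\sqrt{\log n}}8$. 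The same detection statement yields the stronger ``moreover'': any grid profile avoiding a $0.1$-deviation must have flipped a triggered test player, hence disagrees with $x$ on that pure coordinate and is far from $x$.

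I expect the main obstacle to be precisely this simultaneous, one-sided control over all $2^d$ cube vertices. This is the direction \emph{opposite} to Spencer-type balancing: rather than exhibiting one good rounding, I must certify that \emph{no} rounding is good, i.e.\ a lower bound on the discrepancy of the weight system; and the ceiling $\sqrt{\log n}$ (rather than a larger power) is exactly what having only $P\le n$ test players to cover $2^d$ vertices through a Gaussian tail $\bar\Phi(t)\approx e^{-t^2/2}$ allows. Two further points need care but should be routine: keeping every payoff in $[0,1]$ while calibrating the thresholds so that each $\gamma_p>0$ is genuinely positive yet negligible, and verifying that the balancing sub-game makes $x$ the unique exact equilibrium of the full game, so that $C^k_x$ is well defined and the comparison is to a single canonical equilibrium.
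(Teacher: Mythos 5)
Your construction is sound and reaches the same $\Theta(\sqrt{\log n})$ threshold, but by a genuinely different mechanism than the paper's. The paper's game is explicit: $2b$ matching-pennies players arranged in disjoint pairs, plus one \emph{observing} player for \emph{every} subset $S\subset[2b]$ of size $b$, whose risky action pays $1$ exactly when the count $\sum_{i\in S}a_i$ lands in a window of width $\Theta(\sqrt b)$ around $b/2$ (the safe action always pays $0.5$). Detection of a grid vertex is then \emph{deterministic}, by pigeonhole: among the $2b$ grid strategies, some $b$ lie on the same side of $\tfrac12$, and that subset's count shifts by $2\sqrt b$ (several standard deviations), pushing it out of the window, so that observing player's risky action becomes dominated by the safe one. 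You replace the complete family of balanced subsets by $P\approx n$ \emph{random} $\pm1$ weight vectors, and replace pigeonhole by the probabilistic method: a fixed sign pattern $\eta$ triggers a given test player with probability roughly $\bar\Phi(\Theta(k))$, and the condition $P\,\bar\Phi(\Theta(k))>d\ln 2$ makes the union bound over the $2^d$ vertices go through, with room to spare for $k<\sqrt{\log n}/8$. Both proofs then use the identical CLT/anti-concentration step (the paper footnotes that Chebyshev would do) to turn a mean shift of order one standard deviation into a $>0.1$ deviation gain. The paper's route buys an explicit game and a two-line covering argument; yours buys flexibility --- the number of balancing players $d$ is decoupled from $\log n$ (your $d\approx\sqrt n$ works, and so would $d=\Theta(\log n)$), and it makes the ``one-sided discrepancy lower bound'' nature of the statement explicit, which resonates nicely with the paper's Section \ref{sec:disc}. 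Two details to repair when writing it up: (i) a plain cyclic matching-pennies wiring does not by itself force uniqueness (an all-matching cycle admits pure equilibria); use disjoint matching-pennies pairs as the paper does, or a cycle with an odd number of mismatcher edges; (ii) the vertices of $C^k_x$ also move the test players' coordinates down to $\tfrac{k-1}{k}$, so they are not literally indexed by $\eta\in\{\pm1\}^d$ --- but a triggered test player still places weight at least $\tfrac{k-1}{k}\ge\tfrac12$ on its dominated action when $k\ge 2$ (the case $k=1$ being killed by the matching-pennies pairs themselves), so its deviation gain stays above $0.1$ and the conclusion is unaffected.
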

 
\begin{proof}
Consider the following game with $2b+{2b\choose b}$ players. $b$ pairs of players are playing matching-pennies with each other, we denote these players by $1,2,...,2b$ and call them the \emph{matching-pennies players}. Each one of the remaining ${2b\choose b}$ is characterized by a set $S \subset [2b]$, and we call them the \emph{observing players}. Player $S$ has to guess whether the amount of players that will play $1$ in $S$ is close to $\frac{b}{2}$. More formally, player $S$ has two strategies $0$ and $1$. His utility is given by
\begin{align*}
u(0,a_S)&=0.5 \text{ (independently of $a_S$), }\\
u(1,a_S)&=\begin{cases} 
1 \text{ if } \frac{b}{2}-2\sqrt{b} \leq \sum_{i\in S} a_i \leq \frac{b}{2}+2\sqrt{b} \\
0 \text{ otherwise. }   
\end{cases}
\end{align*}
The unique exact equilibrium of this game is where all matching-pennies players are playing $(\frac{1}{2},\frac{1}{2})$ and all the observing players are playing $1$. The latter follows from the fact that the amount of $1$'s is distributed according to a binomial distribution $Bin(b,\frac{1}{2})\approx \mathcal{N}(\frac{b}{2},\frac{b}{4})$. The probability that this amount will be in the segment $[\frac{b}{2}-\sqrt{b} , \frac{b}{2}+\sqrt{b}]$ is $1-2\Phi(-2)>0.94$.

Consider now the case where the matching-pennies players are restricted to actions on an odd grid of size $k=\frac{\sqrt{b}}{4}$. More importantly, every player is playing a mixed strategy that is $\frac{2}{\sqrt{b}}$-away from $\frac{1}{2}$. There exists a set $S$ such that all players in $S$ are either (all) playing a mixed strategy above $\frac{1}{2}$ or (all) are playing a mixed strategy below $\frac{1}{2}$. W.l.o.g., assume that the latter happen. Note that the amount of $1$s for this specific set $S$ is a distribution with expectation of at least $\mu \geq b(\frac{1}{2}+\frac{4}{\sqrt{b}})=\frac{b}{2}+2\sqrt{b}$, and a variance of at most $\sigma^2 \leq \frac{b}{4}$. By the Central Limit Theorem\footnote{Here and in the proof of Theorem \ref{theo:dis} we rely on the powerful Central limit Theorem. Similar arguments can be done by using the less powerful Chebishev's inequality.}, the probability that the amount of $1$'s are in the segment $(-\infty, \frac{b}{2}+\sqrt{b}]$ is at most $\Phi(-2)<0.03$. Therefore player $S$ must play $1$ with probability of at most $\frac{0.1}{0.47}<0.22$ in a $0.1$-Nash equilibrium, which is far from his pure strategy 1 in the exact equilibrium. 

To conclude, the total number of players in the game is $n=2b+{2b\choose b}\leq 2^{2b}$, and for odd grid of size $k<\frac{\sqrt{\log(n)}}{8}$ all approximate equilibria on the grid are far from the unique exact equilibrium.
\end{proof}

\subsection{The connection to discrepancy theory}\label{sec:disc}
The basic question that is considered in discrepancy theory is the following: Given a $0,1$ matrix $M$ of size $n\times m$, how close to $0$ can one make the sum of (all) rows by multiplying the columns by\footnote{The same question has an elegant equivalent formulation through two-coloring of elements in set systems.} $\{1,-1\}$. More formally, we define

\begin{align*}
disc(M)=\min_{\chi \in \{1,-1\}^m} ||M\chi ||_\infty.
\end{align*}

The classical ''Six standard deviations suffice" Theorem by Spencer \cite{S} states that for $m\geq n$ and for every matrix $M$ we have $disc(M)\leq 6\sqrt{n}$. For the case where the matrix is \emph{sparse} namely each column contains at most $t$ $1$s Back and Fiala conjectured that $disc(M)=O(\sqrt{t})$  (independently of $n$ and $m$) \cite{BF}.

A particular case of Back-Fiala conjecture is the case of balanced matrices (up to a constant factor).

Given a matrix $M\in \{0,1\}^{n\times m}$ we denote $r_i=\sum_{j\in [m]} M_{i,j}$ for every $i\in [n]$ and $c_j=\sum_{i\in [n]} M_{i,j}$ for every $j\in [m]$ the number of 1s in each row and column.

\begin{definition}
A matrix $M\in \{0,1\}^{n\times m}$ is called \emph{$\alpha$-balanced} if $\frac{1}{\alpha}\leq \frac{c_i}{c_j} \leq \alpha$, $\frac{1}{\alpha}\leq \frac{r_i}{r_j} \leq \alpha$, and $\frac{1}{\alpha}\leq \frac{r_i}{c_j} \leq \alpha$ for every pair of rows/ columns. 
\end{definition}

\begin{conjecture}\label{con:BF}
\textbf{$\alpha$-balanced Beck-Fiala conjecture.} For every $\alpha$-balanced matrix $M$ holds $disc(M)=O(\sqrt{t})$ where $t=r_1$ (or alternatively the sum of any other row or column).  
\end{conjecture}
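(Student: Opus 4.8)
The plan is to attack the conjecture through the partial-coloring paradigm of discrepancy theory: reduce the problem to a sequence of rounds in which we sign roughly half of the still-free columns while keeping every row sum bounded, and then recurse on the columns left uncolored. Before starting, I would dispose of the easy regime. Because $M$ is $\alpha$-balanced, the identity $\sum_i r_i = \sum_j c_j$ together with $r_i,c_j = \Theta(t)$ forces $n = \Theta_\alpha(m)$ and every row/column weight to be $\Theta(t)$. Hence when $t = \Omega(n)$ the target $O(\sqrt t)$ already follows from Spencer's theorem (there $\sqrt n = O(\sqrt t)$), and the genuine difficulty is confined to the sparse regime $t \ll n$.

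In the sparse regime I would first produce a non-optimal milestone bound. A uniformly random $\chi \in \{-1,1\}^m$ makes each row sum $(M\chi)_i$ a sum of $r_i = \Theta(t)$ independent signs, so it has standard deviation $\Theta(\sqrt t)$; a Chernoff bound and a union bound over the $n$ rows yield $disc(M) = O(\sqrt{t\log n})$. The same bound follows more robustly from a vector-balancing theorem (Banaszczyk) applied to the columns of $M$ rescaled by $\sqrt t$, each of $\ell_2$-norm $\Theta(\sqrt t)$. This clean $O(\sqrt{t\log n})$ is the quantity to improve upon.

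The crux, and the step I expect to be the genuine obstacle, is removing the $\sqrt{\log n}$ factor to reach $O(\sqrt t)$. Here I would iterate the partial-coloring method: in each round, bucket the values of the currently active row sums into intervals of width $\Theta(\sqrt t)$ and use Beck's entropy (pigeonhole) lemma to find two full colorings sharing the same bucket pattern, so that their difference is a partial coloring on at least half the free columns with per-row discrepancy $O(\sqrt t)$. The accounting closes only if the number of active constraints times the per-row bucket-entropy stays below the available entropy $\Theta(m)$ of the free columns. Since $n = \Theta_\alpha(m)$, a naive count leaves no slack, so the plan must exploit balancedness twice: once to argue that after a column is signed its influence is uniformly $\Theta(t)$, so that the number of rows that can still be violated shrinks geometrically (a balanced analogue of the Beck-Fiala ``floating colors'' / constraint-dropping idea); and once to keep the accumulated discrepancy over the $O(\log m)$ rounds a convergent geometric series summing to $O(\sqrt t)$ rather than $O(\sqrt t\,\log m)$.

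I want to be candid that this last paragraph is precisely where a proof must overcome the barrier that has kept the Beck-Fiala conjecture open since 1981: the entropy and vector-balancing toolkit delivers $O(\sqrt{t\log n})$ essentially unconditionally but supplies no generic mechanism to shave the logarithm, and the regularity of $\alpha$-balanced matrices, while suggestive (spectral control of $MM^\top$, uniform leverage of every column), has not been converted into such a mechanism. My honest expectation is therefore that a complete proof is not presently available; what the surrounding discussion establishes instead, via Theorem \ref{theo:dis}, is the \emph{equivalence} of Question 3 for a concrete game class to this conjecture, so that the game-theoretic side inherits exactly the same obstruction. The value of the plan above is to isolate the single hard step — removal of the $\sqrt{\log n}$ factor in the balanced sparse regime — and to confirm that every other ingredient (the regime reduction and the $O(\sqrt{t\log n})$ milestone) is routine.
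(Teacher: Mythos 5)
You were asked to prove a statement that the paper itself never proves: Conjecture \ref{con:BF} is a \emph{conjecture}, and the paper's footnote states explicitly that no one has proved or disproved the Beck--Fiala conjecture even for these $\alpha$-balanced instances. The paper's only result involving this statement is Theorem \ref{theo:dis}, which establishes its \emph{equivalence} to the game-theoretic Conjecture \ref{con:close-eq}; there is no proof of the conjecture itself to compare your attempt against. Given that, your candid conclusion --- that a complete proof is not presently available --- is the correct verdict, and your proposal correctly identifies the paper's actual contribution as the equivalence rather than a resolution. Your routine steps also check out: balancedness and $\sum_i r_i = \sum_j c_j$ do force $n = \Theta_\alpha(m)$ with all row/column weights $\Theta_\alpha(t)$, so Spencer's theorem handles the regime $t = \Omega(n)$; and the random-coloring/Banaszczyk bound $O(\sqrt{t\log n})$ is indeed the state of the art in the sparse regime. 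Your isolation of the hard step --- shaving the $\sqrt{\log n}$ factor when $t \ll n$ and $n = \Theta(m)$, where the entropy/partial-coloring accounting has no slack --- is an accurate description of the barrier that has kept the conjecture open since 1981. So there is no hidden gap for me to point out beyond the one you yourself flag; just be aware that what you have written is a research plan identifying an open problem, not a proof, and that this matches the status the statement has in the paper.
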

There are evidences to believe that the $\alpha$-balanced conjecture is not significantly simpler than the original conjecture, because for similar problems where tight lower bounds are known (for instance Spencer's Theorem \cite{S}) the lower bounds satisfy balanceness.

We consider the class of \emph{majority matching-pennies games}. A majority matching-pennies game is characterized by a matrix $M\in \{0,1\}^{n\times m}$, and it consists of $n+m$ players. Each row/column player has two actions $\{1,-1\}$, and he should decide whether to multiply all the elements in his row/column by $-1$.  The utility of every row player is given by:
\begin{itemize}
\item $1$ if the sum of numbers in his row is positive (after we take into account his action and the actions of his column opponents).
\item $0$ if the sum of numbers in his row is $0$.
\item $-1$ if the sum of numbers in his row is negative.
\end{itemize}

The utilities of every column player is the opposite of the row players:

\begin{itemize}
\item $-1$ if the sum of numbers in his row is positive.
\item $0$ if the sum of numbers in his row is $0$.
\item $1$ if the sum of numbers in his row is negative.
\end{itemize}

We call these games majority matching-pennies since they are equivalent to a bipartite polymatrix game where row player $i$ interacting with column player $j$ in a matching pennies game iff $M_{i,j}=1$. Unlike standard polymiatrix games where players receive the sum of payoffs, here players receive the sign of the sum of payoffs.

Note that the profile of actions where every player is playing $(\frac{1}{2},\frac{1}{2})$ is an exact equilibrium in every majority matching-pennies game. 

A subclass of majority matching-pennies games is \emph{$\alpha$-balanced majority matching pennies games} where the matrix $M$ is $\alpha$-balanced.

\begin{conjecture}\label{con:close-eq}
\textbf{Existence of close approximate equilibrium equilibrium, for $\alpha$-balanced games.} There exists a global constant $k=k(\alpha)$ such that one of the $2^{n+m}$ profiles of the form $(\frac{k \pm 1}{2k})^{n+m}$ is a $0.4$-Nash equilibrium. 
\end{conjecture}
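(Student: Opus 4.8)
The plan is to prove the existence of the close $0.4$-equilibrium directly, using Conjecture \ref{con:BF} (which I am permitted to assume, since it appears earlier) as the single external input. Fix an $\alpha$-balanced matrix $M$ and write $R_i=\{j:M_{i,j}=1\}$, $C_j=\{i:M_{i,j}=1\}$ for the supports of row $i$ and column $j$, with $|R_i|=r_i$, $|C_j|=c_j$, all comparable to $t$ by balancedness. A grid profile is specified by sign vectors $\chi^r\in\{\pm1\}^n$, $\chi^c\in\{\pm1\}^m$, where row player $i$ plays his first action with probability $\frac{k+\chi^r_i}{2k}$ (so $\mathbb{E}[\sigma_i]=\chi^r_i/k$) and similarly for columns. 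The first step is to write down the deviation gains. Since the row-sum of player $i$ after all multipliers is $\sigma_i\sum_{j\in R_i}\tau_j$, his payoff for action $\sigma_i$ is $\sigma_i\beta_i$ with $\beta_i=\mathbb{E}[\mathrm{sign}(\sum_{j\in R_i}\tau_j)]$; hence his grid payoff is $\frac{\chi^r_i}{k}\beta_i$, his best reply earns $|\beta_i|$, and his gain is $|\beta_i|-\frac{\chi^r_i}{k}\beta_i\le(1+\tfrac1k)|\beta_i|$. Symmetrically a column player's gain is at most $(1+\tfrac1k)|\gamma_j|$ with $\gamma_j=\mathbb{E}[\mathrm{sign}(\sum_{i\in C_j}\sigma_i)]$. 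Thus it suffices to choose $\chi^r,\chi^c$ and a constant $k=k(\alpha)$ making every $|\beta_i|$ and $|\gamma_j|$ at most $0.39$.

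Note the bipartite decoupling: $\beta_i$ depends on $\chi^c$ only through $\mathbb{E}[\sum_{j\in R_i}\tau_j]=\frac1k(M\chi^c)_i$, while $\gamma_j$ depends on $\chi^r$ only through $\frac1k(M^\top\chi^r)_j$, so the two families of constraints are handled independently. I would split into two regimes. If $t\le T_0$ for a constant threshold $T_0=T_0(\alpha)$, every support is bounded, and $\mathbb{E}[\mathrm{sign}(\sum_{j\in R_i}\tau_j)]$ is a smooth function of the biases that vanishes at zero bias, so $|\beta_i|\le (\text{const})\,r_i/k\le(\text{const})\,\alpha T_0/k$; here any coloring (even all $+1$) works and $k=O(\alpha T_0)$ suffices. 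If $t>T_0$, I invoke Conjecture \ref{con:BF} for the $\alpha$-balanced matrices $M$ and $M^\top$ (the balancedness definition is transpose-symmetric, so both qualify) to obtain colorings with $\|M\chi^c\|_\infty\le C\sqrt t$ and $\|M^\top\chi^r\|_\infty\le C\sqrt t$ for a constant $C=C(\alpha)$. Then $\mu_i:=\mathbb{E}[\sum_{j\in R_i}\tau_j]=\frac1k(M\chi^c)_i$ has $|\mu_i|\le C\sqrt t/k$, while $\sigma_i^2=r_i(1-1/k^2)\ge t/(2\alpha)$, so $|\mu_i|/\sigma_i\le C\sqrt{2\alpha}/k$.

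The final step is a Berry--Esseen (central limit) estimate converting the normalized bias into a bound on $\beta_i$: since $\sum_{j\in R_i}\tau_j$ is a sum of $r_i\ge t/\alpha$ independent near-fair $\pm1$ variables, $\beta_i$ is within $O(1/\sqrt t)\le O(1/\sqrt{T_0})$ of $2\Phi(\mu_i/\sigma_i)-1$, and $|2\Phi(z)-1|$ is increasing in $|z|$ and stays below $0.39$ once $|z|\le z_0$ for a suitable absolute constant $z_0$; choosing $k\ge C\sqrt{2\alpha}/z_0$ and $T_0$ large enough to absorb the Berry--Esseen error secures $|\beta_i|\le0.39$, and identically $|\gamma_j|\le0.39$. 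Taking $k=k(\alpha)$ to be the maximum of the two regimes' requirements (still a constant depending only on $\alpha$) makes every deviation gain at most $0.4$, as required. The main obstacle is the large-$t$ regime's reliance on the $O(\sqrt t)$ discrepancy bound: this is exactly the content of Conjecture \ref{con:BF}, and with only the unconditionally-known bounds (Banaszczyk's $O(\sqrt{t\log(n+m)})$) one is forced into $k=O(\sqrt{\log n})$ rather than a constant, so the constant-$k$ claim genuinely rests on the discrepancy conjecture. A secondary technical point is keeping the sign-expectation estimate uniform across all supports, including the atom of $\sum_{j\in R_i}\tau_j$ at $0$ for even $r_i$, which the Berry--Esseen/local-CLT bound handles.
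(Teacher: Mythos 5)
Your proposal is correct and takes essentially the same route as the paper's own (conditional) proof, namely the forward direction of Theorem \ref{theo:dis}: assume Conjecture \ref{con:BF} for both $M$ and $M^T$, play the grid profile $\frac{k\pm 1}{2k}$ determined by the two low-discrepancy colorings, and use a central-limit estimate on the signed row/column sums to show each player is nearly indifferent, hence $0.4$-best replying, with $k=k(\alpha)$ constant. Your explicit deviation-gain algebra, Berry--Esseen error control, and the small-$t$ regime split merely make rigorous the CLT step that the paper invokes informally (cf.\ its footnote offering Chebyshev as an alternative), so the argument is essentially identical in structure.
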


Now we are ready to state the equivalence of approximate equilibria and the discrepancy result.

\begin{theorem}\label{theo:dis}
For every constant $\alpha$, Conjecture \ref{con:BF} is equivalent to Conjecture \ref{con:close-eq}.
\end{theorem}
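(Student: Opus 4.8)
The plan is to translate the equilibrium conditions of a majority matching-pennies game on an $\alpha$-balanced matrix $M$ directly into a statement about $\|M\chi\|_\infty$, and then read the equivalence off this translation. First I would encode each of the $2^{n+m}$ candidate profiles $(\frac{k\pm 1}{2k})^{n+m}$ by sign vectors: a row player $i$ plays $\frac{1}{2}+\sigma_i\frac{1}{2k}$ and a column player $j$ plays $\frac{1}{2}+\tau_j\frac{1}{2k}$, where $\sigma\in\{\pm1\}^n$ and $\tau\in\{\pm1\}^m$. With the convention that action $+1$ means ``do not flip,'' the realized multiplier $\chi_j\in\{\pm1\}$ of column $j$ has mean $\mathbb{E}[\chi_j]=\tau_j/k$, so the (pre-multiplication) row sum $R_i=\sum_j M_{i,j}\chi_j$ has mean $\mathbb{E}[R_i]=(M\tau)_i/k$ and, by independence, variance $\sum_j M_{i,j}(1-1/k^2)=r_i(1-1/k^2)$. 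Since the utility of row player $i$ is $s_i\,\mathrm{sign}(R_i)$, his best deviation gain at the profile equals $|\mathbb{E}[\mathrm{sign}(R_i)]|-(2p_i-1)\mathbb{E}[\mathrm{sign}(R_i)]$, which, because $2p_i-1=\sigma_i/k$ is $O(1/k)$, lies within a factor $1\pm 1/k$ of $|\mathbb{E}[\mathrm{sign}(R_i)]|$. The same computation with $M^{T}$ and $\sigma$ governs the column players.

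The bridge between discrepancy and equilibria is the central-limit estimate $\mathbb{E}[\mathrm{sign}(R_i)]=2\Phi\!\big((M\tau)_i/(k\sqrt{r_i})\big)-1+o(1)$, which holds uniformly as the row sums grow, e.g. via Berry--Esseen with error $O(1/\sqrt{r_i})$. Plugging this in, player $i$ is $0.4$-best-replying if and only if the standardized mean $(M\tau)_i/(k\sqrt{r_i})$ is bounded by an absolute constant $z_0$ chosen so that $2\Phi(z_0)-1$ is slightly below $0.4$. Because $M$ is $\alpha$-balanced we have $r_i=\Theta(t)$ for every $i$, so this condition is equivalent to $|(M\tau)_i|=\Theta(k\sqrt t)$, and ranging over all rows it becomes $\|M\tau\|_\infty=O(\sqrt t)$ for fixed $k,\alpha$. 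Symmetrically, all column players $0.4$-best-reply if and only if $\|M^{T}\sigma\|_\infty=O(\sqrt t)$.

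Given this dictionary both implications are short. For Conjecture \ref{con:BF} $\Rightarrow$ Conjecture \ref{con:close-eq}: since the transpose of an $\alpha$-balanced matrix is again $\alpha$-balanced, applying the $\alpha$-balanced Beck--Fiala bound to $M$ and to $M^{T}$ yields colorings $\tau$ and $\sigma$ with $\|M\tau\|_\infty,\|M^{T}\sigma\|_\infty\le c\sqrt t$; choosing the constant grid parameter $k=k(\alpha)$ large enough that $c\sqrt{\alpha}/k\le z_0$ makes the corresponding near-$\tfrac12$ profile a $0.4$-Nash equilibrium. For the converse, a $0.4$-equilibrium among the near-$\tfrac12$ profiles supplies, through its column grid choices, a coloring $\tau$; the row players' equilibrium conditions force $|\mathbb{E}[\mathrm{sign}(R_i)]|\le 0.4/(1-1/k)<1$, hence $(M\tau)_i/(k\sqrt{r_i})$ is bounded by a constant, giving $\|M\tau\|_\infty=O(k\sqrt t)=O(\sqrt t)$ and therefore $disc(M)=O(\sqrt t)$, which is exactly the $\alpha$-balanced Beck--Fiala conclusion for $M$.

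I expect the main obstacle to be making the central-limit bridge quantitative and uniform in both directions. The forward direction needs only concentration (a small standardized mean forces $|\mathbb{E}[\mathrm{sign}(R_i)]|$ away from $1$), but the converse needs anti-concentration: a deviation gain bounded below $1$ must force $|\mathbb{E}[R_i]|=O(\sqrt{r_i})$, and this is precisely where the Gaussian shape (via Berry--Esseen, together with the lattice corrections of the integer-valued $R_i$) and the balancedness $r_i=\Theta(t)$ are used to keep all error terms $o(1)$ simultaneously across every row and column. A secondary point is to dispose of small $t$: for bounded row sums the bound $disc(M)=O(\sqrt t)$ is vacuous and the equilibrium can be exhibited directly, so the equivalence is only substantive, and the central limit theorem only needed, in the regime $t\to\infty$.
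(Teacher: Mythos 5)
Your proposal is correct and takes essentially the same route as the paper's proof: encode the near-$\tfrac12$ grid profiles by sign vectors, use the central limit theorem to translate a player's deviation gain into a bound on the standardized mean $(M\tau)_i/(k\sqrt{r_i})$, invoke Beck--Fiala for both $M$ and $M^{T}$ with a large constant $k=k(\alpha)$ in one direction, and in the converse extract the coloring from the column players' grid choices and use the row players' forced near-indifference (they play strictly mixed strategies in $[0.4,0.6]$) to conclude $\|M\tau\|_\infty=O(\sqrt{t})$. Your added care with Berry--Esseen, lattice corrections, and the small-$t$ regime simply makes quantitative what the paper handles by an informal CLT appeal (footnoted with a Chebyshev fallback).
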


Although Theorem \ref{theo:dis} demonstrates a connection between Back-Fiala conjecture and a very specific question on approximation of equilibrium, we believe that the Theorem provides interesting incites for the following reasons. First, as mentioned in the introduction, it connects between two seemingly unrelated topics of game theory and discrepancy theory. Second, it suggests that the question of the existence of $k$-uniform $\epsilon$-Nash equilibrium for $k=O(1)$ is probably quite involved. Third, if one comes up with a proof of sufficiency of $k=O(1)$ for the existence of $k$-uniform $\epsilon$-Nash equilibrium in binary-action games, then it would be interesting to understand where this approximate Nash equilibrium in majority matching-pennies games is located. Informally, note that majority matching-pennies games have conflict of interests between the row players and the column players: row players want the matrix to increase the number of $1$s whereas column players want to increase the number of $(-1)$s. Thus it is reasonable to believe that the approximate Nash equilibrium will indeed be located ``close to" (at least in the relative $l_1$ metric) the exact equilibrium $(\frac{1}{2})^{n+m}$.

\begin{proof}[Proof of Theorem \ref{theo:dis}]
First we prove that Conjecture \ref{con:BF} implies Conjecture \ref{con:close-eq}. 

Beck-Fiala conjecture implies that for every matrix $M$ there exists $\chi \in \{-1,1\}^m$ such that $||M\chi ||_\infty \leq C\sqrt{t}$. Note that $M^T$ is also an $\alpha$-balanced matrix, thus there exists also $\chi' \in \{-1,1\}^n$ such that $||M^T \chi' ||_\infty \leq C\sqrt{t}$.

We set $k=3\alpha C$, and we argue that the profile of actions where every column player $j$ is playing $\frac{k+\chi_j}{2k}$, and every row player $i$ is playing $\frac{k+\chi'_j}{2k}$ is a $0.4$-Nash equilibrium.

We prove that the first row player cannot gain more than $\frac{1}{3}$ by deviation. For other players the same arguments hold. Given that the first row player is playing 1, we analyse the distribution $S$ of the sum of elements at the first row (given the mixed strategy of the column players). $S$ is a sum of $t'$ Bernoulli $\pm 1$ variables for $\frac{1}{\alpha}t < t' <\alpha t$. The expectation of $S$ is exactly $\mu=\mathbb{E}[S] = \sum_{j} \frac{\chi_j}{k}M_{1,j}$. In addition $\sigma^2\geq \frac{1}{4} \frac{t}{\alpha}$. By the central limit Theorem the distribution $S$ can be approximated by $S\sim N(\mu,\sigma^2)$. Note that by Beck-Fiala conjecture we have $| \mu | \leq \frac{C\sqrt{t}}{k}$.  Wlog, assume that $\mu\geq 0$. Then we have that $Pr(S<0)\approx \Phi(-\frac{\mu}{\sigma})\geq \Phi(-\frac{2C\alpha}{k})\geq 0.4$. So by playing $1$ the first row player receives a payoff of at most $0.2$ and at least $0$. Therefore by playing $-1$ the first row player  receives a payoff of at least $-0.2$ and at most $0$. So every mixed strategy is at least $0.4$-best reply.

Now we prove the opposite direction, that Conjecture \ref{con:close-eq} implies Conjecture \ref{con:BF}.

Given a matrix $M$, we consider the mixed action profile $(\frac{k \pm 1}{2k})^{n+m}$ that forms a $0.4$-Nash equilibrium in the majority matching-pennies game that is defined by $M$, and we set $\chi_i=\pm 1$ for $i\in [m]$ according to the mixed strategy of the $i$th column player at the $0.4$-Nash equilibrium. Namely, if the $i$th column player is playing $\frac{k+1}{2k}$ we set $\chi_i = 1$, otherwise, if he plays $\frac{k-1}{2k}$, we set $\chi_i = -1$. We show that the first row sums up to $O(\sqrt{t})$. For every other row the same arguments hold. Note that the first row player plays a mixed strategy in $[0.4,0.6]$. Therefore, in a $0.4$-Nash equilibrium he must be $1$-indifferent between the two actions $1$ and $-1$. As before, denote by $S$ the sum of elements at the first row in the $0.4$-Nash equilibrium profile, given that the first row player  is playing 1. 1-indifference implies that $0.25\geq Pr(S\geq 0)$ and $0.25 \geq Pr(S \geq 0)$. The variance of $S$ is at most $\sigma^2 \leq \alpha t$. If $\mu > \sqrt{\alpha t}$, then by the Central Limit Theorem we get that $Pr(S\leq 0)\leq 0.16$ which is a contradiction. Similarly it is impossible that $\mu < -\sqrt{\alpha t}$. Therefore, $|\mu|<\sqrt{\alpha t}$ which implies that $|\sum_j \frac{\chi_j}{k} M_{1,j} | \leq \sqrt{\alpha t}$. Namely $|\sum_j \chi_j M_{1,j} | \leq k \sqrt{\alpha t}=O(\sqrt{t})$.
\end{proof}


\section{Correlated distributions}

\begin{theorem}\label{theo:ir-lower-bound}
There exists an $n$-player binary-actions game where every distribution with support $k<\frac{1}{4}\log n$ is not $\frac{1}{4}$-individually rational.  
\end{theorem}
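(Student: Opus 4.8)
The plan is to exhibit a single game where every small-support correlated distribution fails to be individually rational for at least one player, and the natural candidate is a ``guessing the majority'' type game in the spirit of the observing players from Proposition \ref{pro:far}. I would take $n$ players, each with binary actions, and let the game's payoff for each player depend on some aggregate statistic of the other players' realized actions. The key tension to engineer is this: the individually rational level $v_i$ of each player should be reasonably high (say close to $\tfrac{1}{2}$ or higher), because a player can guarantee a decent payoff by a suitable mixed strategy against the \emph{worst-case} deterministic opponents; yet any distribution $x$ supported on only $k$ profiles is, by its very low entropy, concentrated on few outcomes, so some player can be ``caught out'' and receive a payoff well below $v_i - \tfrac14$.

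The central mechanism I would exploit is a counting/entropy argument against small support. A distribution with support size $k < \tfrac14 \log n$ is uniform (or arbitrary) over at most $k$ action profiles in $\{0,1\}^n$. The idea is that with so few profiles in the support, the ``aggregate behavior'' seen by the players is very restricted, and one can find a player $i$ for whom the realized payoff under $x$ is forced to be small. Concretely, I would try to design payoffs so that player $i$ gets a high payoff only when the configuration of opponents falls into some target set $T_i \subseteq \{0,1\}^{[n]\setminus\{i\}}$, while the individually rational level $v_i$ is achieved by a mixed strategy that hits $T_i$ with high probability against any fixed opponent profile. Because the support has size $< \tfrac14 \log n$, a union/pigeonhole bound over the $k$ support points and the $n$ players should show that the target sets cannot simultaneously be hit often enough: for some player $i$, the $x$-probability that opponents land in $T_i$ is too small, so $u_i(x) < v_i - \tfrac14$.

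The steps I would carry out, in order, are: (1) specify the payoff functions so that $v_i$ is computable and bounded below by a constant strictly exceeding the threshold we want to violate; (2) verify the lower bound on $v_i$ by producing an explicit guaranteeing mixed strategy and applying a concentration estimate (Central Limit Theorem or Chebyshev, as the paper does elsewhere) to show the player guarantees roughly $v_i$ against \emph{any} deterministic opponent profile; (3) fix an arbitrary distribution $x$ of support $k < \tfrac14 \log n$ and show, via a counting argument over the at most $2^k$ distinct patterns that $k$ support points can induce, that some player's favorable event has $x$-probability at most a constant, forcing $u_i(x) \le v_i - \tfrac14$. The crucial quantitative link is that $k < \tfrac14\log n$ gives $2^{4k} < n$, so the number of distinct ``column patterns'' across the support is smaller than $n$; this is exactly the pigeonhole room needed to find a player whose relevant statistic is constant (or nearly so) across the whole support, and hence controllable by the adversarial payoff design.

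The main obstacle I anticipate is arranging the two requirements to coexist: the payoff structure must make $v_i$ \emph{large} (so a player, choosing his own mixed strategy, can self-guarantee against worst-case opponents) while simultaneously making $u_i(x)$ \emph{small} for low-support $x$ (so the distribution cannot protect the player). These pull in opposite directions, and reconciling them is where the clever choice of the aggregate statistic and the matching of the $\tfrac14\log n$ threshold to the $\tfrac14$ approximation parameter must be done carefully; I expect the delicate part is choosing the target sets $T_i$ so that the pigeonhole over a support of size $< \tfrac14\log n$ genuinely produces a player whose $T_i$-probability is bounded away from $1$ by at least the slack needed to beat the $\tfrac14$ gap.
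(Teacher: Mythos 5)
Your proposal correctly identifies the central tension (each player must have a high individually rational level, yet every low-support distribution must starve some player), but it never resolves it: the game is not constructed, the target sets $T_i$ are not specified, and you yourself flag their choice as the ``delicate part.'' That delicate part is the entire content of the theorem, and the concrete mechanisms you do suggest point in a direction that provably cannot work. If the payoff of a player depends on a counting/majority-type statistic of the opponents (``in the spirit of the observing players of Proposition \ref{pro:far}''), then a single player's own randomization moves that statistic by at most one unit, so no CLT/Chebyshev argument can let him \emph{self-guarantee} anything; his individually rational level then comes only from a safe action, and a small-support distribution in which everyone plays safe meets every such level trivially. Indeed, the game of Proposition \ref{pro:far} itself admits a support-$4$ distribution that is fully individually rational (four global profiles giving each matching-pennies pair the uniform distribution over its four action combinations, with all observing players playing the safe action $0$), so that style of construction cannot prove the theorem. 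Similarly, the pigeonhole over column patterns ($2^k<n^{1/4}$ patterns, hence large groups of players acting identically across the support) does not by itself force anyone's payoff low: in natural anti-coordination designs one can exhibit support-$1$ or support-$2$ distributions that remain individually rational.

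The paper's proof supplies exactly the two ingredients your plan is missing. First, the aggregate statistic is not a count but a \emph{parity}: $f(a)=\oplus_{(s,p):\,a_{(s,p)}=1}\,s\in\{0,1\}^k$, which is maximally sensitive to every single player, so a unilateral deviation of player $(s',V')$ translates $f(a)$ by $s'$ and flips his payoff between $0$ and $1$ at \emph{every} opponent profile; hence $v_i=\frac12$ purely from his own fair coin, with no safe action available. Second, the player set is doubly exponential in $k$ and contains \emph{all} pairs $(s,V)$ of a shift $s$ and a transversal $V$ of the matching $x\leftrightarrow x\oplus s$; given any distribution $\mu$ with support below $\frac14 2^k$, the pushforward $\nu=f(\mu)$ has small support in $\{0,1\}^k$, and a probabilistic-method argument (choose $x\sim\nu$ and $s$ uniform, so $x\oplus s$ is uniform and escapes the support with probability $\ge\frac34$) produces a concrete player $(s',V')$ whose good set $V'$ has $\nu$-mass at most $\frac14$, i.e., payoff at most $\frac14$ against an individually rational level of $\frac12$. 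Without the parity structure (to kill safe actions) and the richness of the player family (so that \emph{every} small support is avoided by some player's target set), the quantifier order in the theorem --- one fixed game defeating all small-support distributions --- cannot be satisfied.
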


Since every $\epsilon$- correlated/ coarse correlated equilibrium distribution is, in particular, $\epsilon$-individually rational, the above lower bound holds for correlated and coarse correlated equilibria. 

We recall that a dynamic is \emph{converging to an approximate correlated/ coarse correlated equilibrium} in $t$ steps if the empirical distribution of play forms an approximate  correlated/ coarse correlated equilibrium (with high probability). 

\begin{corollary}
No dynamic converges to approximate  correlated/ coarse correlated equilibrium faster than in $\Omega(\log n)$.
\end{corollary}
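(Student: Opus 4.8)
The plan is to deduce the corollary directly from Theorem~\ref{theo:ir-lower-bound}, using two elementary observations: that the empirical distribution of play over $t$ rounds is supported on at most $t$ profiles, and that approximate (coarse) correlated equilibria are in particular approximately individually rational. The heavy lifting is entirely in Theorem~\ref{theo:ir-lower-bound}, which I assume; the corollary is a short quantitative deduction.

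First I would fix, for each given number of players $n$, the game $G_n$ guaranteed by Theorem~\ref{theo:ir-lower-bound}: the $n$-player binary-action game in which \emph{every} distribution whose support has size $k<\tfrac14\log n$ fails to be $\tfrac14$-individually rational. I then run an arbitrary (possibly randomized, history-dependent) dynamic on $G_n$ and set the approximation parameter to $\epsilon=\tfrac14$. Recalling the definition of convergence stated just above the corollary, the dynamic converges in $t$ steps if the empirical distribution of play $\hat\mu_t$ is, with high probability, a $\tfrac14$-correlated (or coarse correlated) equilibrium of $G_n$. The key structural point is that $\hat\mu_t$ is the uniform distribution over the multiset of the $t$ profiles actually played, so its support satisfies $|\mathrm{supp}(\hat\mu_t)|\le t$ \emph{surely}, regardless of the dynamic's internal randomness; it is important here to apply this bound to the empirical distribution itself and not to the per-round mixed strategies.

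Next I chain the two facts. As noted in the preliminaries, $\epsilon$-individual rationality is the weakest of the all-players rationality notions, so every $\tfrac14$-correlated equilibrium and every $\tfrac14$-coarse correlated equilibrium is in particular $\tfrac14$-individually rational. Hence, on the high-probability event that $\hat\mu_t$ is a $\tfrac14$-(coarse) correlated equilibrium of $G_n$, the distribution $\hat\mu_t$ is $\tfrac14$-individually rational, and Theorem~\ref{theo:ir-lower-bound} forces $|\mathrm{supp}(\hat\mu_t)|\ge \tfrac14\log n$. Combining with $|\mathrm{supp}(\hat\mu_t)|\le t$ yields $t\ge \tfrac14\log n=\Omega(\log n)$.

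Since $G_n$ depends only on $n$ and not on the choice of dynamic, the same game defeats every dynamic simultaneously, which is even stronger than the stated claim. There is no real obstacle here, only two points requiring care: the support bound must be read off the empirical distribution (as above), and the ``with high probability'' quantifier must be handled correctly. The latter is benign precisely because $|\mathrm{supp}(\hat\mu_t)|\le t$ holds deterministically: on the whole high-probability convergence event we obtain $t\ge\tfrac14\log n$, so any dynamic that converges with high probability to a $\tfrac14$-(coarse) correlated equilibrium on $G_n$ must use at least $\tfrac14\log n=\Omega(\log n)$ steps.
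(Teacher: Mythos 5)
Your proof is correct and follows exactly the paper's own (much terser) argument: the empirical distribution after $t$ steps has support of size at most $t$, approximate (coarse) correlated equilibria are in particular approximately individually rational, and Theorem~\ref{theo:ir-lower-bound} then forces $t\ge\tfrac14\log n$. Your write-up merely makes explicit the details (the deterministic support bound and the handling of the high-probability event) that the paper leaves implicit.
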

The corollary simply follows from the fact that no empirical distribution of size smaller than $\Omega(\log n)$ can form an approximate correlated/ coarse correlated equilibrium. On the other hand, it is known that regret minimizing dynamics converge to approximate correlated/ coarse correlated equilibrium in $O(\log n)$ steps, see \cite{CBL,GR}. Thus, $\Theta(\log n)$ is the fastest possible rate of convergence of dynamics to approximate correlated 
correlated/ coarse correlated equilibrium and regret minimizing dynamics achieve this bound.

\begin{proof}[Proof of Theorem \ref{theo:ir-lower-bound}]
For two vectors $x,y\in \{0,1\}^k$ we denote $x\oplus y = x+y \mod 2\in  \{0,1\}^k$.

For every $k\in \mathbb{N}$ we consider a game with $n=(2^k-1) 2^{(2^{k-1})}$ players. Players are denoted by $(s,p)$ where $0\neq s\in \{0,1\}^k$ and $p\in \{0,1\}^{(2^{k-1})}$. We identify a player with a pair $(s,V)$ where $V\subset \{0,1\}^k$ is a subset of size $|V|=\frac{1}{2} 2^{k}$ as follows: note that $s$ defines a matching over $\{0,1\}^k$ $x\leftrightarrow (x\oplus s)$ with $2^{k-1}$ matched pairs. The vector $p$ indicates which one of the two vectors ($x$ or $(x\oplus s)$) belongs to the subset $V$.

Each player has binary actions set $\{0,1\}$. We define a mapping $f:\{0,1\}^n\rightarrow \{0,1\}^k$ which assigns a $k$-dimensional binary vector for each action profile in the game:
\begin{align*}
f(a)=\underset{(s,p):a_{(s,p)}=1 }{\mathlarger{\mathlarger{\mathlarger{\oplus}}}} s. 
\end{align*}
Now we define the utility of player $(s,p)$, or equivalently player $(s,V)$ to be 
\begin{align*}
u_{(s,p)}(a)=\begin{cases}
1 \text{ if } f(a)\in V \\
0 \text{ otherwise.}
\end{cases}
\end{align*}

Note that any unilateral divination of a single player at any action profile switches his utility from 0 to 1, or from 1 to 0. Formally, for every player 
$(s',p')$ (or equivalently $(s',V')$) and every action profile of the opponents $a_{-(s',p')}$ holds $$(u_{(s',p')}(0,a_{-(s',p')}),u_{(s',p')}(1,a_{-(s',p')}))\in \{(0,1),(1,0)\},$$ because
\begin{align*}
u_{(s',p')}(0,a_{-(s',p')})=0 \Rightarrow \underset{(s,p):a_{(s,p)}=1 }{\mathlarger{\mathlarger{\mathlarger{\oplus}}}} s \notin V' \Rightarrow \underset{(s,p):a_{(s,p)}=1 }{\mathlarger{\mathlarger{\mathlarger{\oplus}}}} s \oplus s' \in V' \Rightarrow u_{(s',p')}(1,a_{-(s',p')})=1.
\end{align*}
Similarly, $u_{(s',p')}(0,a_{-(s',p')})=1$ implies that $u_{(s',p')}(1,a_{-(s',p')})=0$. 

Therefore, each player can guarantee a payoff of $\frac{1}{2}$ by playing $(\frac{1}{2},\frac{1}{2})$, which shows that the individually rational level of every player is at least $\frac{1}{2}$.

We should show that for every distribution $\mu \in \Delta(\{0,1\}^n)$ with support of at most $\frac{1}{4}2^{k}\geq \frac{1}{4}\log n$, there exists a player who receives a payoff of at most $\frac{1}{4}$ (which is $\frac{1}{4}$-far from his individually rational level). 

The distribution $\mu \in \Delta(\{0,1\}^n)$ induces a distribution $\nu =f(\mu)\in \Delta(\{0,1\}^k)$. The support of $\nu$ remains to be at most $\frac{1}{4}2^{k}$. We apply the probabilistic method to find a player with low payoff. We choose a vector $x$ at random according to $\nu$, and we choose a vector $s$ (independently) uniformly at random from $\{0,1\}^k$. Note that for every fixed vector $x$, the vector $x\oplus s$ is distributed uniformly at random, therefore $Pr_{x,s}(x \oplus s \notin support(\nu))\geq \frac{3}{4}$. Therefore, there exists a vector $s'$ such that $Pr_{x \sim \nu}(x \oplus s' \notin support(\nu))\geq \frac{3}{4}$. Consider the matching of $\{0,1\}^k$ that is obtained by the vector $s'$. We set the vector $p'$ as follows: For a pair $x\leftrightarrow x \oplus s'$ where $x\in support(\nu)$ and $x \oplus s' \notin support(\nu)$ we set $x \oplus s' \in V'$ (and thus $x\notin V'$). For all other pairs we set the choice of $p'$ arbitrarily. By definition, the payoff of player $(s',p')$ can be expressed as $u_i(\mu)=Pr_{x\sim \nu} (x \in V')$, but we have set $V'$ in a way that guarantees $Pr_{x\sim \nu} (x \in V')\leq \frac{1}{4}$.

\end{proof}

In contrast to approximate correlated equilibrium which requires support of size $k=\Omega(\log n)$, for the weaker notion of weak approximate correlated equilibrium where we allow a small constant fraction of players to have an arbitrary regret, existence of $k$-uniform weak approximate equilibrium is guaranteed for $k=O(1)$.

\begin{proposition}
Every $n$-player $m$-actions game admits a $k$-uniform $(\epsilon,\delta)$-weak approximate correlated equilibrium for\footnote{In fact, a polylogarithmic dependence on $m$ can also be obtained, using slightly more involved arguments, as in Theorem 6 in \cite{BBP}.}  $k=2\frac{m\ln m -\ln \delta}{\epsilon^2}$.
\end{proposition}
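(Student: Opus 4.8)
The plan is to mirror the sampling argument used for Theorem~\ref{theo:awne}, but now to draw whole action profiles \emph{jointly} from an exact correlated equilibrium rather than sampling each player's mixed action independently. Since every finite game possesses an exact (even Nash, hence correlated) equilibrium, let $x\in\Delta([m]^n)$ be such an equilibrium, and draw $k$ i.i.d.\ profiles $a^1,\dots,a^k\sim x$. Let $\mu$ be the uniform distribution over the multiset $\{a^1,\dots,a^k\}$; by construction $\mu$ is $k$-uniform, which is exactly the object we are after. I would then show that with positive probability $\mu$ is an $(\epsilon,\delta)$-weak approximate correlated equilibrium.

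The key simplification over the Nash case is that here the relevant quantities are genuine i.i.d.\ averages, so the product-distribution inequality of Theorem~\ref{theo:prod-con} is not needed and a plain Hoeffding bound suffices. Fix a player $i$ and a switching function $f:[m]\to[m]$, and define the regret term $g_{i,f}(a)=u_i(f(a_i),a_{-i})-u_i(a)\in[-1,1]$. Because $x$ is a correlated equilibrium, $\mathbb{E}_{a\sim x}[g_{i,f}(a)]\le 0$, while $\mathbb{E}_{a\sim\mu}[g_{i,f}(a)]=\tfrac1k\sum_{j=1}^k g_{i,f}(a^j)$ is an average of $k$ i.i.d.\ bounded variables. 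Hoeffding's inequality (range $2$) then gives
\begin{equation*}
\mathbb{P}\Big(\mathbb{E}_{a\sim\mu}[g_{i,f}(a)]>\epsilon\Big)\le \mathbb{P}\Big(\tfrac1k\sum_j g_{i,f}(a^j)-\mathbb{E}_{a\sim x}[g_{i,f}(a)]>\epsilon\Big)\le e^{-k\epsilon^2/2}.
\end{equation*}

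Next I would union-bound over the $m^m$ switching functions available to player $i$: calling player $i$ \emph{bad} if some $f$ yields regret exceeding $\epsilon$ under $\mu$, the probability of being bad is at most $m^m e^{-k\epsilon^2/2}$. Hence the expected fraction of bad players is at most $m^m e^{-k\epsilon^2/2}$, and the stated choice $k=2(m\ln m-\ln\delta)/\epsilon^2$ makes this quantity at most $\delta$. By the probabilistic method there is a realization of $a^1,\dots,a^k$ in which at most a $\delta$-fraction of players are bad, i.e.\ at least a $(1-\delta)$-fraction of players have regret at most $\epsilon$ against every switching function; this realization is the desired $k$-uniform $(\epsilon,\delta)$-weak approximate correlated equilibrium.

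I expect the only real obstacle to be the union bound over switching functions: it is what forces the $m\ln m$ term in $k$, and tightening it to the polylogarithmic-in-$m$ dependence alluded to in the footnote would require a more careful covering argument over deviations (as in \cite{BBP}) rather than the crude count of $m^m$ functions. The dependence on $n$, by contrast, disappears entirely, since the number of players enters only through the linear-in-$n$ count of bad players and is absorbed once we pass from the expected \emph{count} of bad players to the expected \emph{fraction}.
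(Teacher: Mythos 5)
Your proposal is correct and follows essentially the same route as the paper's own (sketched) proof: sample $k$ i.i.d.\ profiles from an exact correlated equilibrium, apply a Hoeffding-type bound to the i.i.d.\ regret averages, union-bound over the $m^m$ switching functions, and conclude via the expected fraction of bad players. Your write-up merely makes explicit the constants that the paper hides in the $\Theta(\cdot)$ notation, and verifies that the stated value of $k$ suffices.
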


The proof is similar to the proof of Theorem 5 in \cite{BBP}. In \cite{BBP} Theorem 5 they show that after $k=O(\log n)$ samples from an exact correlated equilibrium \emph{all} players will have low regret (w.h.p). Here we observe that after $k=O(1)$ samples \emph{most} of the players will have low regret (w.h.p).

\begin{proof}[Sketch of the proof]
We sample $k$ samples from an exact correlated equilibrium, and we consider the regret of a single player $i$ for not using the switching rule $f:[m] \rightarrow [m]$ (namely, every time player $i$ was recommended to play action $j$ he switches to $f(j)$). The probability that the this regret will exceed $\epsilon$ is $e^{-\Theta(\frac{k}{\epsilon^2})}$. Denote by $g_i$ the event where for player $i$ the regret is be below $\epsilon$ for all switching rules. By the union bound $Pr(g_i)\geq 1-m^m e^{-\Theta(\frac{k}{\epsilon^2})}$. Therefore, there exists a realization for which at least $1-m^m e^{-\Theta(\frac{k}{\epsilon^2})}$ fraction of the events $g_i$ occur. By the choice of $k$ we have $1-m^m e^{-\Theta(\frac{k}{\epsilon^2})}>1-\delta$. Namely, this realization of the sampling forms an $(\epsilon,\delta)$-weak approximate equilibrium.
\end{proof}

\end{document}